\newcommand{\eps}{\varepsilon}
\newcommand{\Id}{\mathds{1}}
\newcommand{\B}{\mathbb{B}}
\newcommand{\Beff}{\mathbb{B}\sub{eff}}
\newcommand{\C}{\mathbb{C}}
\newcommand{\R}{\mathbb{R}}
\newcommand{\Z}{\mathbb{Z}}
\newcommand{\T}{\mathbb{T}}
\newcommand{\Hf}{\mathcal{H}\sub{f}}
\newcommand{\U}{\mathrm{U}}
\newcommand{\A}{\mathcal{A}}
\newcommand{\scal}[2]{\left\langle #1, #2 \right\rangle}
\newcommand{\norm}[1]{\left\| #1 \right\|}
\newcommand{\eu}{\mathrm{e}}
\newcommand{\iu}{\mathrm{i}}
\newcommand{\di}{\mathrm{d}}
\newcommand{\act}{\triangleleft}
\newcommand{\sub}[1]{_{\text{#1}}}
\newcommand{\tra}{^{\mathsf{T}}}
\DeclareMathOperator{\Tr}{Tr}
\DeclareMathOperator{\Ran}{Ran}
\DeclareMathOperator{\Span}{Span}
\newcommand{\ie}{{\sl i.\,e.\ }}
\newcommand{\eg}{{\sl e.\,g.\ }}
\newcommand{\set}[1]{ \left\{  #1 \right\}}
\begin{document}

\title*{Chern and Fu--Kane--Mele invariants as topological obstructions}
\author{Domenico Monaco}
\institute{Domenico Monaco \at Eberhard Karls Universit\"{a}t T\"{u}bingen, Auf der Morgenstelle 10, 72076 T\"{u}bingen (Germany)\newline
\email{domenico.monaco@uni-tuebingen.de}}
%
%
\maketitle


\abstract{%
The use of topological invariants to describe geometric phases of quantum matter has become an essential tool in modern solid state physics. The first instance of this paradigmatic trend can be traced to the study of the quantum Hall effect, in which the Chern number underlies the quantization of the transverse Hall conductivity. More recently, in the framework of time-reversal symmetric topological insulators and quantum spin Hall systems, a new topological classification has been proposed by Fu, Kane and Mele, where the label takes value in $\Z_2$. \newline
We illustrate how both the Chern number $c \in \Z$ and the Fu--Kane--Mele invariant $\delta \in \Z_2$ of $2$-dimensional topological insulators can be characterized as topological obstructions. Indeed, $c$ quantifies the obstruction to the existence of a frame of Bloch states for the crystal which is both continuous and periodic with respect to the crystal momentum. Instead, $\delta$ measures the possibility to impose a further time-reversal symmetry constraint on the Bloch frame. 
}%

\keywords{Topological insulators, quantum Hall effect, quantum spin Hall effect, Chern numbers, Fu--Kane--Mele invariants, obstruction theory.}

\section{Introduction}
\label{sec:Intro}

One of the most prominent instances of Wigner's ``unresonable effectiveness of mathematics'' in condensed matter systems is provided by \emph{topological insulators} \cite{HasanKane10}. These materials, although insulating in the bulk, have the property of conducting currents on their boundary, making them amenable to various types of applications in material science, and even in quantum computing. A thorough understanding of the transport properties of these materials, however, can be achieved only by investigating the topology of the occupied states that fill the bulk energy bands, by virtue of a principle known as the bulk-edge correspondence. Consequently, some of the techniques of topology and differential geometry, once relegated to abstract mathematics, have nowadays become common knowledge also among solid state physicists.

To better understand how topology enters in the world of condensed matter systems, it is particularly instructive to consider the archetypal example of a topological insulator, given by a \emph{quantum Hall system} \cite{Graf07}. An effectively 2-dimensional crystalline medium is immersed in a uniform magnetic field perpendicular to the plane of the sample, and an electric current is driven in one direction along the crystal. The induced current is measured in the transverse direction. In a remarkable experiment, performed at very low temperatures by von Klitzing and his collaborators \cite{vonKlitzing80}, the (Hall) conductivity $\sigma_{\mathrm{H}}$ associated to this transverse current was shown to display {\it plateaux} which occurred at integer multiples of a fundamental constant, measured morever with an astounding precision:
\begin{equation} \label{eqn:Hall}
\sigma_{\mathrm{H}} = n \, \frac{e^2}{h}, \quad n \in \mathbb{Z}.
\end{equation}
Later theoretical investigations showed that a topological phenomenon underlies this quantization: the integer $n$ in the above formula was shown to be the first Chern number of a vector bundle, naturally associated to the quantum system \cite{ThoulessKohmotoNightingaledenNijs82, AvronSeilerSimon94, BellissardvanHelstSchulz-Baldes94}.

The only role played by the magnetic field in quantum Hall systems is that of breaking \emph{time-reversal symmetry}: if the system were time-reversal symmetric, then the Hall conductivity would vanish, and the system would remain in an insulating state. This fact was clarified by Haldane \cite{Haldane88}, who showed that non-trivial topological phases can be displayed also in absence of a magnetic field, thus initiating the field of \emph{Chern insulators} \cite{Bestwick15, Chang15}. Picking up on the work by Haldane, Fu, Kane and Mele \cite{KaneMele05, FuKane06, FuKaneMele07} later introduced a model which still displays a topological phase even if time-reversal symmetry is preveserved, and is by now recognized as a milestone in the history of topological insulators. The phenomenon that the model proposed to illustrate is that of the \emph{quantum spin Hall effect}, which differs from the quantum Hall effect in that the external magnetic field is replaced by spin-orbit interactions (exactly to preserve time-reversal symmetry), and spin rather than charge currents flow on the boundary of the sample. From the point of view of topological phases, the peculiarity of this phenomenon is that, contrary to what happens for Chern and quantum Hall insulators, one can only distinguish between the trivial (insulating) and non-trivial (quantum spin Hall) phase: the label is then assigned by a $\mathbb{Z}_2$-valued topological index. Giving a full account of the geometric nature of this invariant has been a primary objective for mathematical physicists in the last decade, and a plethora of mathematical tools has been used in this endeavour, ranging from $K$-theory to homotopy theory, from functional analysis to noncommutative geometry, from equivariant cohomology to operator theory. We refer to \cite{FiorenzaMonacoPanati16_F, ProdanSchulz-Baldes16, CorneanMonacoTeufel16} for recent accounts on the ever-growing literature on the subject.

The purpose of this contribution is to express both the Chern number and the Fu--Kane--Mele $\Z_2$ index of 2-dimensional topological insulators in a common framework, provided by \emph{obstruction theory}. It will be shown how both invariants arise as \emph{topological obstructions} to the existence of a Bloch frame, which roughly speaking can be described as a set of continuous functions which parametrize the occupied states of the physical systems and are compatible with its symmetries, namely periodicity with respect to the Bravais lattice of the crystal and, possibly, time-reversal symmetry; a precise definition will be given in the next Section. The nature of these topological invariants as obstructions was early realized \cite{Kohmoto85, FuKane06}, employing methods from bundle theory and using local trivializing charts. Our strategy relies instead on successive extensions of the definition of the Bloch frame, which is well-suited for induction on the dimension of the system and is reminiscent of the extension of a section of a bundle along the cellular decomposition of its base space. We use only basic facts from linear algebra and the topology of the group of unitary matrices $\U(m)$; besides, our method has the further advantage of constructing the required Bloch frame in an algorithmic fashion.

\section{Topology of crystalline systems}

\subsection{Periodic Hamiltonians}

To set up a rigorous investigation of topological phases of quantum matter, we first have to understand the mathematical description of crystalline systems. The starting point is a \emph{periodic Hamiltonian}: one could think of continuous models described by Schr\"odinger operators, or of discrete, tight-binding models described by hopping matrices. Periodicity means that the operator $H$ should commute with the translations associated to a lattice $\Gamma \simeq \mathbb{Z}^d \subset \mathbb{R}^d$, namely the Bravais lattice of the crystal under scrutiny. This symmetry of the Hamiltonian leads to a partial diagonalization of it, by looking at common (generalized) eigenstates for the Hamiltonian and the translations: this procedure, which is reminiscent of the Fourier decomposition, goes by the name of \emph{Bloch-Floquet reduction} \cite{MonacoPanati15}. In this representation, the Hamiltonian becomes a fibered operator, with fibre $H(k)$ acting on a space $\mathcal{H}_\mathrm{f}$ containing the degrees of freedom associated to a unit cell for $\Gamma$. The parameter $k \in \R^d$, also called \emph{crystal} or \emph{Bloch momentum}, is determined up to translations by vectors in the dual lattice $\Lambda := \Gamma^*$, and thus can be considered as an element of the \emph{Brillouin torus} $\mathbb{T}^d := \mathbb{R}^d / \Lambda$. Indeed, the fibre Hamiltonians at $k$ and $k + \lambda$, $\lambda \in \Lambda$, are unitarily intertwined by a representation $\tau \colon \Lambda \to \mathcal{U}(\mathcal{H}_\mathrm{f})$, namely
\[ H(k+\lambda) = \tau_\lambda \, H(k) \, \tau_\lambda^{-1}. \]
The above relation will be called \emph{$\tau$-covariance} in what follows.

Due to the compactness of the unit cell, under fairly general assumptions%
\footnote{In continuous models, where $H$ is a Schr\"{o}dinger operator, these assumptions usually amount to asking that the electromagnetic potentials be infinitesimally Kato-small (possibly in the sense of quadratic forms) with respect to the kinetic part \cite{ReedSimon4}.} %
the operator $H(k)$ has discrete spectrum: the function $k \mapsto E_n(k)$, associated to one of its eigenvalues (labelled, say, in increasing order), is called the \emph{Bloch band}. The spectrum of the original Hamiltonian is recovered by considering the (possibly overlapping) ranges of all these functions, and leads to the well-known band-gap structure of the spectrum of a periodic operator. If one assumes that the Fermi energy of the system lies in a spectral gap for $H$, then it makes sense to consider the Fermi projector $P(k)$ on the $m$ occupied bands. The gap condition implies that the dependence of $P(k)$ on $k$ is analytic, and the family of operators $P(k)$ is also $\tau$-covariant (see \eg \cite[Prop.~2.1]{PanatiPisante13}).

For the applications to topological insulators that we are aiming at, we need to consider also a further symmetry of the Hamiltonian, namely 	\emph{time-reversal symmetry}. This is implemented antiunitarily on the Hilbert space of the quantum particle, and flips the arrow of time (and hence the crystal momentum). Mathematically, this amounts to require the existence of an antiunitary operator $\Theta$ on $\Hf$, squaring to $\pm \mathbf{1}_{\Hf}$, and such that
\[ H(-k) = \Theta \, H(k) \, \Theta^{-1}. \]
We say that the family of operators $H(k)$ is \emph{time-reversal symmetric} if the above holds. It is easy to verify that the Fermi projectors associated to a time-reversal symmetric Hamiltonians are time-reversal symmetric as well. In what follows, we will focus mainly on the case of a \emph{fermionic} time-reversal symmetry operator, namely on the case where $\Theta^2 = - \mathbf{1}_{\Hf}$, as is the case for example for quantum spin Hall systems.

\subsection{Bloch bundle, Berry connection and Berry curvature}
\label{sec:Berry}

From the previous analysis of periodic and time-reversal symmetric Hamiltonians, we ended up with a family of projectors $\{P(k)\}_{k \in \mathbb{R}^d} \subset \mathcal{B}(\Hf)$, $P(k)^* = P(k) = P(k)^2$, satisfying the following properties:
\begin{description}
 \item[(P$_1$)] \emph{analyticity}: the map $k \mapsto P(k)$ is a real-analytic map on $\R^d$ with values in $\mathcal{B}(\Hf)$;
 \item[(P$_2$)] \emph{$\tau$-covariance}: the map $k \mapsto P(k)$ satisfies
 \[ P(k+\lambda) = \tau_\lambda \, P(k) \, \tau_\lambda^{-1} \]
 for a unitary representation $\tau \colon \Lambda \to \mathcal{U}(\Hf)$ of the lattice $\Lambda \simeq \Z^d \subset \R^d$;
 \item[(P$_3$)] \emph{time-reversal symmetry}: the map $k \mapsto P(k)$ satisfies
 \[ P(-k) = \Theta \, P(k) \, \Theta^{-1} \]
 for an antiunitary operator $\Theta \colon \Hf \to \Hf$ such that $\Theta^2 = - \mathbf{1}_{\Hf}$.
\end{description}

The topology underlying the quantum system described by the Hamiltonian $H$ is encoded in its eigenprojectors, satisfying the above properties%
\footnote{In order for (P$_2$) and (P$_3$) to be compatible with each other, one should also require that $\tau_\lambda \, \Theta = \tau_\lambda^{-1} \, \Theta$ for all $\lambda \in \Lambda$. We will assume this in the following.}%
. Indeed, one can associate to any family of projectors satisfying (P$_1$) and (P$_2$) a vector bundle $\mathcal{E}$ over the torus $\mathbb{T}^d$, called the \emph{Bloch bundle}, via a procedure reminescent of the Serre--Swan construction: the fibre of $\mathcal{E}$ over the point $k \in \mathbb{T}^d$ is the $m$-dimensional vector space $\Ran P(k)$ (we refer to \cite{Panati07, MonacoPanati15} for details). The geometry of the Bloch bundle for $d=2$ is what enters in the theoretical understanding of the quantum Hall effect: the integer $n$ that equals the Hall conductivity \eqref{eqn:Hall} in natural units is the (\emph{first}) \emph{Chern number} of $\mathcal{E}$, defined as
\begin{equation} \label{eqn:c1}
c_1(P) := \frac{1}{2 \pi \mathrm{i}} \int_{\mathbb{T}^2} \Tr_{\Hf} \, \left( P(k) \, [\partial_1 P(k), \partial_2 P(k)] \right) \, \mathrm{d} k_1 \mathrm{d} k_2 \quad \in \mathbb{Z}.
\end{equation}
When $d = 2$, the above integer characterizes the isomorphism class of $\mathcal{E}$ as a vector bundle over $\T^2$ \cite{Panati07}. Since both quantum Hall and quantum spin Hall systems are 2-dimensional, in the following we will mostly restrict ourselves to $d=2$, where in particular the previous characterization holds.

In the case where $\set{P(k)}_{k \in \R^d}$ satisfies also (P$_3$), then the Bloch bundle can be equipped with further structure, namely that of a fiberwise antilinear endomorphism $\widehat{\Theta} \colon \mathcal{E} \to \mathcal{E}$, lifting the involution $\theta(k) = -k$ on the base torus and squaring to the operator which multiplies fiberwise by $-1$. We call a vector bundle endowed with such an endomorphism $\widehat{\Theta}$ a \emph{time-reversal symmetric vector bundle}. One can verify that if $d=2$ every such vector bundle is trivial, \ie isomorphic to the product bundle $\T^2 \times \C^m$, since under (P$_3$) the integrand in the definition \eqref{eqn:c1} of the Chern number is an odd function of $k$, and hence integrates to zero on $\T^2$ \cite{Panati07, MonacoPanati15}. However, the Bloch bundle may still be non-trivial as time-reversal symmetric bundle \cite{DeNittisGomi15, FiorenzaMonacoPanati16_F}. The index that characterizes the isomorphism class of $\mathcal{E}$ is the \emph{Fu--Kane--Mele index} $\delta(P) \in \Z_2$, first introduced in \cite{FuKane06} to describe quantum spin Hall systems. The expression of the $\Z_2$ index is slightly more involved than the one for the Chern number, and requires the introduction of some further terminology, which will be however essential in what follows. 

Given a family of projectors $\set{P(k)}_{k \in \R^d}$ of constant rank $m$, a \emph{Bloch frame} for it is a family of $m$-tuples of vectors $\Psi = \set{\psi_a(k)}_{1 \le a \le m, \: k \in \R^d}$, which are orthonormal and span the vector subspace $\Ran P(k) \subset \Hf$ for all $k \in \R^d$. If $P(k)$ depends smoothly on $k$, then the same can be required of the frame $\Psi$. We immediately stress that, when $\set{P(k)}_{k \in \R^d}$ satisfies (P$_1$) and (P$_2$), then a Bloch frame is nothing but a trivializing frame for the associated Bloch bundle, and hence the existence of a continuous frame is in general guaranteed only \emph{locally} in $k$. Let us also point out that, whenever a Bloch frame $\Psi$ exists (say on an open domain $\Omega \subset \R^d$), then any other Bloch frame $\Phi$ is obtained by setting
\begin{equation} \label{BlochGauge}
\phi_b(k) := \sum_{a=1}^{m} \psi_a(k) \, U(k)_{ab}, \quad 1 \le b \le m,
\end{equation}
where $U(k)$, $k \in \Omega$, is a unitary matrix, called the \emph{Bloch gauge}. We use the shorthand notation
\begin{equation} \label{act}
\Phi(k) = \Psi(k) \act U(k), \quad k \in \Omega,
\end{equation}
to write \eqref{BlochGauge} in a more compact form. This defines a free right action of $\U(m)$ on frames, meaning that $(\Psi \act U_1) \act U_2 = \Psi \act (U_1 \, U_2)$ and that $\Psi \act U_1 = \Psi \act U_2$ if and only if $U_1 = U_2$.

When a (local) Bloch frame $\Psi = \set{\psi_a(k)}_{1 \le a \le m, \: k \in \R^d}$ is given, then one can define the \emph{Berry connection}, \ie the matrix-valued $1$-form given by 
\begin{equation} \label{Berry}
A = \left(\sum_{\mu=1}^{d} A_\mu(k)_{ab} \, \di k_\mu \right)_{1 \le a,b \le m}, \quad A_\mu(k)_{ab} := -\iu \scal{\psi_a(k)}{\partial_\mu \psi_b(k)}.
\end{equation}
This is indeed the matrix $1$-form of the Grassmann connection on the Bloch bundle $\mathcal{E}$ (\ie the pullback of the standard connection $\di$ via the obvious inclusion $\mathcal{E} \hookrightarrow \T^d \times \Hf$), subordinated to the local trivialization induced by the choice of the Bloch frame. The \emph{abelian} or \emph{$\U(1)$ Berry connection} is then the trace of the connection matrix, namely
\[ \A := \Tr(A) = \sum_{\mu=1}^{d} \A_\mu(k) \, \di k_\mu, \quad \A_\mu(k) := -\iu \sum_{a=1}^{m} \scal{\psi_a(k)}{\partial_\mu \psi_a(k)}. \]

The \emph{Berry curvature} 2-form is the curvature of the Berry connection, namely
\[ F := \di A - \iu \left[A \; \overset{\wedge}{,} \; A\right] \]
which spells out to
\begin{gather*}
F = \sum_{1 \le \mu < \nu \le d} F_{\mu \nu}(k) \, \di k_\mu \wedge \di k_\nu, \\
F_{\mu \nu}(k) := \partial_\mu A_\nu(k) - \partial_\nu A_\mu - \iu \left( A_\nu \wedge A_\mu - A_\mu \wedge A_\nu \right) 
\end{gather*}
(the wedge product between matrix-valued 1-forms entails also the row-by-column product). Similarly, the \emph{abelian} or \emph{$\U(1)$ Berry curvature} is the trace
\begin{equation} \label{eqn:F=dA}
\mathcal{F} := \Tr(F) = \di \A.
\end{equation}
In terms of the Bloch frame $\Psi$, the curvature $\mathcal{F}$ reads
\[ \mathcal{F} = \sum_{1 \le \mu < \nu \le d} \mathcal{F}_{\mu \nu}(k) \, \di k_\mu \wedge \di k_\nu, \quad \mathcal{F}_{\mu \nu}(k) := 2 \mathrm{Im} \left( \sum_{a=1}^{m} \scal{\partial_\mu \psi_a(k)}{\partial_2 \psi_a(k)} \right). \]
However, even if the Bloch frame is just a local object, the Berry curvature is a \emph{global} one, as it can be expressed directly in terms of the family of projectors: a lenghty but straight-forward computation indeed shows that
\begin{equation} \label{BerryProj}
\mathcal{F}_{\mu \nu}(k) = - \iu \Tr_{\Hf} \big( P(k) \left[ \partial_\mu P(k), \partial_\nu P(k) \right] \big).
\end{equation}

When $d=2$, the above identity allows us to rewrite the Chern number as the integral of the (abelian) Berry curvature, namely
\begin{equation} \label{ChernBerry}
c_1(P) = \frac{1}{2\pi} \int_{\T^2} \mathcal{F} \quad \in \Z
\end{equation}
(compare \eqref{eqn:c1}). Moreover, coming back to the Fu--Kane--Mele index of a time-reversal symmetric family of projectors, we can formulate $\delta(P) \in \Z_2$ through the notions we have just introduced as
\begin{equation} \label{eqn:delta}
\delta(P) := \frac{1}{2\pi} \int_{\T^2_+} \mathcal{F} - \frac{1}{2\pi} \int_{\partial \T^2_+} \A \mod 2
\end{equation}
where $\T^2_+$ denotes the set of points in $\T^2$ with non-negative $k_1$ coordinate \cite{FuKane06, CorneanMonacoTeufel16}. Remember that the Berry connection depends on the choice of a Bloch frame: for the above formula to be well-posed one must require that the Bloch frame be \emph{time-reversal symmetric}, in a sense to be specified in the next Subsection. This point will be discussed further in Section \ref{sec:FKM}.

\begin{remark}[Gauge dependence of Berry connection and curvature]
For future refence, let us notice how the Berry connection and curvature matrices, as well as their abelian versions, change under a change of Bloch gauge. If $\Phi$ and $\Psi$ are related by the gauge transformation $U$ as in \eqref{act}, their connection matrices $A^{\Phi}$ and $A^{\Phi}$ are linked by the equation%
\footnote{An easy way to realize this is the following. The connection matrices $A^{\Psi}_\mu(k)$ and $A^{\Phi}_\mu(k)$ satisfy
\[ \Psi(k) \act A^{\Psi}_\mu(k) = -\iu \partial_\mu \Psi(k), \quad \Phi(k) \act A^{\Phi}_\mu(k) = -\iu \partial_\mu \Phi(k). \]
As by definition we have $\Phi(k) = \Psi(k) \act U(k)$, we obtain
\begin{align*}
\Psi(k) \act ( U(k) A^{\Phi}_\mu(k) ) & = \left( \Psi(k) \act U(k) \right) \act A^{\Phi}_\mu(k) = \Phi(k) \act A^{\Phi}_\mu(k) = -\iu \partial_\mu \Phi(k)  \\
& = -\iu \partial_\mu \left( \Psi(k) \act U(k) \right) = \left( -\iu \partial_\mu \Psi(k) \right) \act U(k) + \Psi(k) \act \left( -\iu \partial_\mu U(k) \right) \\
& = \left( \Psi(k) \act A^{\Psi}_\mu(k) \right) \act U(k) + \Psi(k) \act \left( -\iu \partial_\mu U(k) \right) \\
& = \Psi(k) \act \left( A^{\Psi}_\mu(k) U(k) - \iu \partial_\mu U(k) \right)
\end{align*}
by which we deduce that
\[ U(k) A^{\Phi}_\mu(k) = A^{\Psi}_\mu(k) U(k) - \iu \partial_\mu U(k). \]}
\[ A^{\Phi} = U^{-1} \, A^{\Psi} \, U - \iu U^{-1} \, \di U. \]
Taking the trace of both sides of the above equation we obtain the corresponding relation for the abelian Berry connections, namely
\begin{equation} \label{BerryGauge}
\A^{\Phi}(k) = \A^{\Psi}(k) - \iu \Tr \left( U^{-1} \, \di U \right).
\end{equation}

One can similarly compute that the Berry curvature is a gauge-covariant object, namely
\[ F^{\Phi} = U^{-1} \, F^{\Psi} \, U, 	\]
and consequently the abelian Berry curvature $\mathcal{F}$ is gauge-invariant (namely $\mathcal{F}^{\Phi} = \mathcal{F}^{\Psi}$), as could be deduced already from its expression \eqref{BerryProj} given directly in terms of the projectors $P(k)$.
\end{remark}

\subsection{Obstruction theory}
\label{sec:ObsTh}

Even though \eqref{eqn:c1} and \eqref{eqn:delta} express the Chern number and the Fu--Kane--Mele $\Z_2$ index by means of geometric objects related to the family of projectors (its Berry connection and Berry curvature, specifically), the fact that they indeed compute integers or integers $\bmod\: 2$ is a highly non-trivial statement. In the next Sections, we will deduce this fact by means of \emph{obstruction theory}, a framework which allows to identify both indices as topological obstructions. This method has the advantage of manifesting both the quantization and the topological invariance of both indices, and requires only simple tools from linear algebra and basic topology.

Obstruction theory concerns the existence of a Bloch frame for a family of rank-$m$ projectors $\set{P(k)}_{k \in \R^2}$ satisfying (P$_1$), (P$_2$) and, possibly, (P$_3$), which obeys the same symmetries of the projectors themselves. More specifically, we say that a Bloch frame $\Phi$ for $\set{P(k)}_{k \in \R^2}$ is 
\begin{description}
 \item[(F$_1$)] \emph{continuous} if the map $k \mapsto \Phi(k)$ is a continuous map from $\R^2$ to $\Hf^m$;
 \item[(F$_2$)] \emph{$\tau$-equivariant} if%
 \footnote{The action of any (anti)unitary operator on $\Hf$ is lifted to $\Hf^m$ componentwise.}
 \[ \Phi(k+\lambda) = \tau_\lambda \, \Phi(k) \quad \text{for all } k \in \R^2, \: \lambda \in \Lambda; \]
 \item[(F$_3$)] \emph{time-reversal symmetric} if%
 \footnote{The presence of the reshuffling matrix $\eps$ is needed to make the time-reversal symmetry condition self-consistent. This follows essentially from the fact that the antiunitary operator $\Theta$ defines by restriction a symplectic structure on the invariant subspace $\Ran P(k_\sharp) \subset \Hf$ if $k_\sharp \equiv - k_\sharp \bmod \Lambda$. Notice that in particular the rank $m$ of $P(k)$ must be even under (P$_3$).}
 \[ \Phi(-k) = \Theta \Phi(k) \act \eps \]
 for a skew-symmetric unitary matrix $\eps$. Without loss of generality \cite{Hua44}, it can be assumed that
 \begin{equation} \label{eps=J}
 \eps = \begin{pmatrix} 0 & 1 \\ -1 & 0 \end{pmatrix} \oplus \stackrel{m/2 \text{ times}}{\cdots} \oplus \begin{pmatrix} 0 & 1 \\ -1 & 0 \end{pmatrix}.
 \end{equation}
\end{description}

The above properties in general compete agains each other, as was early realized \cite{Kohmoto85, FuKane06} and as becomes apparent upon observing that a continuous, $\tau$-equivariant (and time-reversal symmetric) Bloch frame would provide a global trivialization of the Bloch bundle as a (time-reversal symmetric) vector bundle. 

The general strategy of obstruction theory consists in considering a continuous, globally defined Bloch frame $\Psi$, and trying to modify it in order to obtain a new Bloch frame $\Phi$ which satisfies also the properties of being $\tau$-equivariant and, possibly, time-reversal symmetric. The input frame $\Psi$ can be constructed by covering $\R^d$ with open balls $B_r(k_j)$, $r>0$, $k_j \in \R^d$, in which $\norm{P(k) - P(k_j)} < 1$, $k \in B_r(k_j)$, and using the Kato--Nagy unitary $U(k;k_j)$, which intertwines $P(k)$ and $P(k_j)$, to extend the choice of an orthonormal basis in the vector space $\Ran P(k_j)$ to a continuous choice of an orthonormal basis $\Psi(k)$ in $\Ran P(k)$ (that is, by definition, to a continuous Bloch frame on $B_r(k_j)$) \cite{Kato66}. An alternative construction makes use of the parallel transport associated to the family of projectors $P(k)$, see \eg \cite{CorneanMonacoTeufel16}. The modification of $\Psi$ into $\Phi$ is performed by successive extensions, first at certain high-symmetry points, then along the edges that connect them, and finally on the whole $\R^2$. We will see that this latter step, from 1-dimensional lines to 2-dimensional faces, is in general topologically obstucted, and that this obstruction is encoded in the vanishing of the Chern number if one requires the Bloch frame $\Phi$ to satisfy  (F$_1$) and (F$_2$) (see Section \ref{sec:Chern}), or in the vanishing of the Fu--Kane--Mele index if one also requires (F$_3$) to hold (see Section \ref{sec:FKM}).

\begin{remark}[Analytic Bloch frames]
The obstruction to the existence of symmetric Bloch frames, being topological in nature, fits well inside the continuous category. However, one may wonder wheter an analytic family of projectors as in (P$_1$) admits a Bloch frame depending \emph{analytically} on $k$ as well. This question is crucial in the study of conduction/insulation properties in crystals via \emph{maximally localized Wannier functions} (see \eg \cite{Wannier review, BrouderPanati07}). There are by now several techniques that are able to construct analytic frames out of continuous ones preserving moreover all the symmetries, for example by convolution with suitable kernels \cite{CorneanHerbstNenciu15, CorneanMonacoTeufel16}. These are all incarnations of the more general \emph{Oka's principle}, which states that in fair generality the obstruction to the triviality of a vector bundle in the continuous category can be lifted to the analytic one \cite{Panati07}.
\end{remark}

\section{The Chern number as a topological obstruction}
\label{sec:Chern}

In this Section we illustrate how the Chern number in \eqref{eqn:c1} encodes the topological obstruction to the existence of a continuous and $\tau$-equivariant Bloch frame for a family of projectors $\set{P(k)}_{k \in \R^2}$ satisfying (P$_1$) and (P$_2$).

\subsection{Reduction to the unit cell}
\label{sec:BZ}

The $\tau$-covariance of the family of projectors allows one to focus on points $k$ lying in the \emph{fundamental unit cell} for the lattice $\Lambda = \Span_\Z\set{e_1, e_2}$, namely
\[ \B := \set{k = k_1 e_1 + k_2 e_2 \in \R^2 : |k_j| \le 1/2, \: 1 \le j \le 2}. \]
Indeed, if one can find a continuous Bloch frame $\Phi$ on $\B$ such that $\Phi(k+\lambda) = \tau_\lambda \Phi(k)$ whenever $k \in \B$ and $\lambda \in \Lambda$ are such that $k+\lambda \in \B$ (a condition to be imposed on the boundary of the fundamental unit cell), then one can enforce $\tau$-equivariance to extend the definition of $\Phi$ to the whole $\R^2$ in a continuous way. Conversely, the restriction $\Phi$ to $\B$ of a continuous, $\tau$-equivariant Bloch frame defined on the whole $\R^2$ satisfies exactly the condition stated above.

\begin{figure}[ht]
\sidecaption[t]
\begin{tikzpicture}[scale=3]
\coordinate (E1) at (1,0);
\coordinate (E2) at (.5,.866);
\coordinate (mE1) at (-1,0);
\coordinate (mE2) at (-.5,-.866);
\coordinate (V1) at (-.75,-.433);
\coordinate (V2) at (.25,-.433);
\coordinate (V3) at (.75,.433);
\coordinate (V4) at (-.25,.433);
\filldraw [lightgray] (V1) -- (V2) -- (V3) -- (V4) -- cycle;
\draw (V1) node [anchor = north] {$v_1$} -- (V2) node [anchor = north] {$v_2$} -- (V3) node [anchor = south] {$v_3$} -- (V4) node [anchor = south] {$v_4$} -- cycle;
\fill (V1) circle (.8pt)
      (V2) circle (.8pt)
      (V3) circle (.8pt)
      (V4) circle (.8pt);
\draw [-stealth'] (mE1) -- (E1) node [anchor = west] {$e_1$};
\draw [-stealth'] (mE2) -- (E2) node [anchor = south] {$e_2$};
\draw (-.25,-.433) node [anchor = north west] {$E_1$}
      (.5,0) node [anchor = north west] {$E_2$}
      (.25,.433) node [anchor = south east] {$E_3$}
      (-.5,0) node [anchor = south east] {$E_4$};
\end{tikzpicture}
\caption{The fundamental unit cell $\B$, its vertices and its edges.}
\label{fig:BZ}
\end{figure}
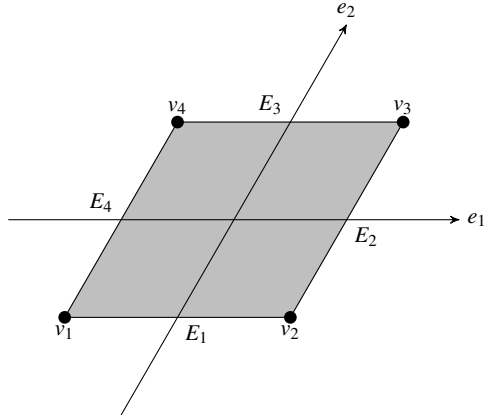

As sketched in Section \ref{sec:ObsTh}, the approach of obstruction theory starts from a Bloch frame $\Psi$ defined on the unit cell. One then modifies its definition on the boundary of $\B$ in order to enforce $\tau$-equivariance there, and then investigates wheter it is possible to extend this modification continuously also on the interior of the unit cell. In particular, this construction on the boundary requires to take care of what happens at the four vertices of $\B$, namely the four points
\[ v_1 = \left(-\frac12, -\frac12\right), \quad v_2 = \left(\frac12, -\frac12\right), \quad v_3 = \left(\frac12, \frac12\right), \quad v_4 = \left(-\frac12, \frac12\right). \]
If this procedure is successful, then the ``output'' frame $\Phi(k)$, $k \in \B$, will satisfy $\tau$-equivariance on the boundary, and it will then be continuously extendable to the whole $\R^2$ by $\tau$-equivariant continuation, as explained above.

Notice that both the input frame $\Psi(k)$ and the output frame $\Phi(k)$ give orthonormal bases for the vector space $\Ran P(k)$, hence they differ by the action of a unitary transformation (a Bloch gauge) $U(k) \in \U(m)$, as in \eqref{BlochGauge}. It is sometimes convenient to consider the continuous map $U \colon \B \to \U(m)$ as the unknown of the problem, rather than the Bloch frame $\Phi$.

We will see that the only step of the construction of $\Phi$ which may be topologically obstructed is the ``face'' extension (from the boundary to the interior of $\B$), and that a quantitative measure of the presence of this topological obstruction is given by the Chern number of the family of projectors.

\subsection{Bloch frame on the boundary}
\label{sec:Boundary}

As a first step, we construct a continuous Bloch frame on the boundary of the fundamental unit cell which satisfies the $\tau$-equivariance condition. The construction can be performed as follows. Given the reference frame $\Psi(v_1)$, one can consider its $\tau$-translates $\tau_{e_1} \Psi(v_1)$ and $\tau_{e_2} \Psi(v_1)$, which constitute orthonormal bases in the subspaces $\Ran P(v_2)$ and $\Ran P(v_4)$, respectively. Let $U\sub{obs}(v_2)$ (respectively $U\sub{obs}(v_4)$) be the unitary matrix which maps the input frame $\Psi(v_2)$ (respectively $\Psi(v_4)$) to $\tau_{e_1} \Psi(v_1)$ (respectively $\tau_{e_2} \Psi(v_1)$):
\[ \tau_{e_1} \Psi(v_1) = \Psi(v_2) \act U\sub{obs}(v_2), \quad \tau_{e_2} \Psi(v_1) = \Psi(v_4) \act U\sub{obs}(v_4). \]
If $\Psi$ were already $\tau$-equivariant then these \emph{obstruction unitaries} would equal the identity matrix. Write $U\sub{obs}(v_\sharp) = \eu^{\iu T(v_\sharp)}$, with $T(v_\sharp) = T(v_\sharp)^*$ self-adjoint, for $v_\sharp \in \set{v_2, v_4}$. Define moreover
\begin{equation} \label{eqn:Phihat}
\widehat{\Phi}(k) := \begin{cases} 
\Psi(k_1,-\frac12) \act \eu^{\iu (2k_1 + 1) T(v_2)/2} & \text{if } k = (k_1, -\frac12) , \: k_1 \in [-\frac12,\frac12], \\
\tau_{e_1} \Psi(-\frac12,k_2) \act \eu^{\iu (2k_2 + 1) T(v_4)/2} & \text{if } k = (\frac12, k_2) , \: k_2 \in [-\frac12,\frac12], \\
\tau_{e_2} \Psi(k_1,-\frac12) \act \eu^{\iu (2k_1 + 1) T(v_2)/2} & \text{if } k = (k_1, \frac12) , \: k_1 \in [-\frac12,\frac12], \\
\Psi(-\frac12,k_2) \act \eu^{\iu (2k_2 + 1) T(v_4)/2} & \text{if } k = (-\frac12, k_2) , \: k_2 \in [-\frac12,\frac12].
\end{cases}
\end{equation}

The frame $\widehat{\Phi}$ is defined on the boundary $\partial \B$ of the fundamental unit cell, where it is also $\tau$-equivariant. Moreover, it is continuous, as on the vertex $v_3$ the definitions coincide. Indeed we have
\[ \tau_{e_1} \Psi(v_4) \act U\sub{obs}(v_4) = \tau_{e_1} \tau_{e_2} \Psi(v_1) = \tau_{e_2} \tau_{e_1} \Psi(v_1) = \tau_{e_2} \Psi(v_2) \act U\sub{obs}(v_2). \]

\subsection{Extension to the face: a topological obstruction}
\label{sec:Extension}

In order to see whether it is possible to extend the frame $\widehat{\Phi}$ to a continuous $\tau$-equivariant Bloch frame $\Phi$ defined on the whole unit cell $\B$, we first introduce the unitary map $\widehat{U}(k)$ which maps the input frame $\Psi(k)$ to the frame $\widehat{\Phi}(k)$, \ie such that 
\begin{equation} \label{hatU}
\widehat{\Phi}(k) = \Psi(k) \act \widehat{U}(k), \quad k \in \partial \B
\end{equation}
(compare \eqref{BlochGauge}). This defines a continuous map $\widehat{U} \colon \partial \B \to \U(m)$. If we can find a continuous extension $U \colon \B \to \U(m)$ of $\widehat{U}$ to the unit cell, then $\eqref{BlochGauge}$ can be used to define an extension of the frame $\Phi$ which preserves continuity and $\tau$-equivariance: it turns out that also the converse is true (compare Proposition~\ref{X} below). 

It is a well-known fact in topology \cite[Thm.~17.3.1]{DubrovinNovikovFomenko85} that a continuous map $\widehat{U} \colon \partial \B \to \U(m)$ extends continuously to the inside of the unit cell if and only if the map is homotopically trivial, \ie it can be continuously deformed to a constant map. This condition can be checked by verifying that the integral 
\begin{equation} \label{degdet}
c := \deg([\widehat{U}]) = \frac{\iu}{2 \pi} \oint_{\partial \B} \di k \, \Tr \left( \widehat{U}(k)^{-1} \partial_k \widehat{U}(k) \right)
\end{equation}
vanishes: this is because two maps $\partial \B \to \U(m)$ are homotopic if and only if their \emph{degrees}, defined like in \eqref{degdet}, coincide. Notice that the integral above gives an integer, and provides an isomorphism of the fundamental group $\pi_1(\U(m))$ (whose elements are homotopy classes of maps $\partial \B \to \U(m)$) with the group of integers $\Z$ by assigning $\widehat{U} \mapsto \deg([\widehat{U}])$ \cite[Ch.~8, Sec.~12]{Husemoller94}.

\begin{remark}[Unwinding the determinant is forbidden] \label{rmk:unwinding}
Since we have to extend the \emph{frame} $\widehat{\Phi}$ rather than the \emph{unitary} $\widehat{U}$, one may argue that it may be possible to find another unitary-matrix-valued map that ``unwinds'' the determinant of $\widehat{U}$, while preserving the relevant symmetries of the Bloch frame. This possibility is ruled out by the following result.
\end{remark}

\begin{proposition} \label{X}
Let $\Phi$ be a continuous Bloch frame on $\partial \B$ which is $\tau$-equivariant, and assume that $X \colon \partial \B \to \U(m)$ is a continuous map such that $\Phi \act X$ is also $\tau$-equivariant. Then
\[ \deg([X]) = 0. \]
\end{proposition}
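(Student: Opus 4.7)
The plan is to exploit the freeness of the right $\U(m)$-action on Bloch frames to show that, under the hypothesis that both $\Phi$ and $\Phi \act X$ are $\tau$-equivariant on $\partial \B$, the map $X$ must take equal values at points of $\partial \B$ that are identified by the $\tau$-periodicity of the boundary. Once this is established, the winding integral defining $\deg([X])$ in \eqref{degdet} will cancel pairwise across opposite edges of the unit cell.

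First, I would compare the $\tau$-equivariance condition applied to $\Phi$ and to $\Phi \act X$ on the pair of horizontal edges $k_2 = \pm 1/2$. From $\Phi(k_1, \tfrac12) = \tau_{e_2}\, \Phi(k_1, -\tfrac12)$ together with the obvious observation that the componentwise action of $\tau_{e_2}$ on $\Hf^m$ commutes with the right $\U(m)$-action on frames, one gets
\[ \Phi(k_1, \tfrac12) \act X(k_1, \tfrac12) = \tau_{e_2} \bigl( \Phi(k_1, -\tfrac12) \act X(k_1, -\tfrac12) \bigr) = \Phi(k_1, \tfrac12) \act X(k_1, -\tfrac12). \]
Freeness of the action then forces $X(k_1, \tfrac12) = X(k_1, -\tfrac12)$ for every $k_1 \in [-\tfrac12, \tfrac12]$. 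The same argument applied to the pair of vertical edges yields $X(\tfrac12, k_2) = X(-\tfrac12, k_2)$.

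Second, I would parametrize $\partial \B$ counterclockwise and split the contour integral \eqref{degdet} into the four contributions along the edges $E_1, \dots, E_4$ of Figure~\ref{fig:BZ}. On the horizontal pair, the edge $E_1$ is traversed left-to-right and $E_3$ right-to-left, while the integrand $\Tr\bigl(X^{-1} \partial_{k_1} X\bigr) \, \di k_1$ is pointwise the same along the two edges thanks to $X(k_1, \tfrac12) = X(k_1, -\tfrac12)$; hence the two contributions cancel. The identical mechanism disposes of the vertical pair $E_2, E_4$. Summing the four pieces gives $\deg([X]) = 0$.

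The only delicate point is the bookkeeping: correctly invoking freeness to pass from an equality of frames to an equality of unitary matrices, and keeping track of the orientations of the edges. Everything needed for this has already been introduced in Sections \ref{sec:Berry} and \ref{sec:BZ}, so the argument is elementary once the setup is in place.
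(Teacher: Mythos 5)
Your proposal is correct and follows essentially the same route as the paper: both deduce $X(k+\lambda)=X(k)$ on identified boundary points from the freeness of the right $\U(m)$-action, and then cancel the contributions of opposite edges in the degree integral using their opposite orientations. No gaps.
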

\begin{proof}
We spell out what it means for $\Phi$ and $\Phi \act X$ to be both $\tau$-equivariant:
\[ \Phi(k+\lambda) \act X(k+\lambda) = \tau_\lambda \left( \Phi(k) \act X(k) \right) = \tau_\lambda \Phi(k) \act X(k) = \Phi(k+\lambda) \act X(k). \]
This implies that $X(k+\lambda) = X(k)$, whenever $k \in \partial \B$ and $\lambda \in \Lambda$ are such that $k+\lambda \in \partial \B$. As a consequence, the same is true for the expression $x(k) := \Tr \left( X(k)^{-1} \, \partial_k X(k) \right)$ appearing in the integral defining $\deg([X])$ (compare \eqref{degdet}). Denote by $E_i$ the edge of $\partial \B$ connecting $v_i$ with $v_{(i+1)\bmod4}$ (compare Figure~\ref{fig:BZ}). Then the property $x(k+\lambda) = x(k)$ implies that
\[ \int_{E_3} \di k \, x(k) = \int_{-(E_1 + e_1)} \di k \, x(k) = - \int_{E_1} \di k \, x(k), \text{ that is } \int_{E_1+E_3} \di k \, x(k) = 0. \]
Similarly
\[ \int_{E_2+E_4} \di k \, x(k) = 0. \]
We conclude that
\[ \deg([X]) = \frac{\iu}{2 \pi} \, \int_{E_1+E_2+E_3+E_4} \di k \, x(k) = 0 \]
as wanted.
\end{proof}

\subsection{The obstruction is the Chern number}
\label{sec:TopObs}

We now want to rewrite the integer $c$ in \eqref{degdet} and characterize it as a \emph{topological invariant} of the family of projectors $\set{P(k)}_{k \in \R^2}$ (showing in particular that it does not depend on the input Bloch frame $\Psi$ and on the specific interpolation performed on the obstruction matrices in \eqref{eqn:Phihat}). To this end, we will make use of the (abelian) Berry connection and curvature, introduced in Section \ref{sec:Berry}.
If we calculate $\widehat{\A}$ on $\partial \B$ as in \eqref{Berry} using the vectors of the frame $\widehat{\Phi}$ and analogously compute $\A$ using $\Psi$, then
\begin{equation} \label{hatA}
\widehat{\A} = \A - \iu \Tr \left( \widehat{U}^{-1} \di \widehat{U} \right) \quad \text{on } \partial \B,
\end{equation}
in view of \eqref{hatU} and \eqref{BerryGauge}. 
Integrating both sides of Equation \eqref{hatA} on $\partial \B$, we obtain that
\begin{equation} \label{c1-d}
\begin{aligned}
\frac{1}{2\pi} \oint_{\partial \B} \widehat{\A} & = \frac{1}{2\pi} \oint_{\partial \B} \A - \frac{\iu}{2\pi} \oint_{\partial \B} \di k \, \Tr \left( \widehat{U}(k)^{-1} \partial_k \widehat{U}(k) \right) \\
& = \left( \frac{1}{2\pi} \int_{\B} \mathcal{F} \right) - c
\end{aligned}
\end{equation}
by \eqref{eqn:F=dA} and Stokes theorem.

We will now show that the left-hand side of the above equality vanishes. In order to do so, we exploit the $\tau$-equivariance of the Bloch frame $\widehat{\Phi}$, that is, $\widehat{\Phi}(k+\lambda) = \tau_\lambda \widehat{\Phi}(k)$. Indeed, in terms of the Berry connection matrix $A = A(k) \, \di k$ we have that
\begin{equation} \label{eqn:Berrytau}
\begin{aligned}
\widehat{\Phi}(k+\lambda) \act \widehat{A}(k+\lambda) & = - \iu \partial_k \widehat{\Phi}(k+\lambda) = \tau_\lambda \left( - \iu \partial_k \widehat{\Phi}(k) \right) \\
& = \tau_\lambda \left( \widehat{\Phi}(k+\lambda) \act \widehat{A}(k) \right) = \tau_\lambda \widehat{\Phi}(k+\lambda) \act \widehat{A}(k) \\
& =  \widehat{\Phi}(k+\lambda) \act \widehat{A}(k)
\end{aligned}
\end{equation}
so that $\widehat{A}(k+\lambda) = \widehat{A}(k)$ and, taking the trace, $\widehat{\A}(k+\lambda) = \widehat{\A}(k)$. Arguing similarly to the proof of Proposition~\ref{X}, one can show that the latter relation implies
\[ \int_{E_1+E_3} \widehat{\A} = 0, \quad \int_{E_2+E_4} \widehat{\A} = 0, \]
yielding the vanishing of the left-hand side of \eqref{c1-d}.

Hence we conclude that
\[ \left( \frac{1}{2 \pi} \int_{\B} \mathcal{F} \right) - c = \frac{1}{2\pi} \oint_{\partial \B} \widehat{\A} = 0 \]
which in view of \eqref{ChernBerry} yields
\begin{equation} \label{d=c1}
c  = \frac{1}{2 \pi} \int_{\B} \mathcal{F} = c_1(P)
\end{equation}
as wanted.

\section{The Fu--Kane--Mele invariant as a topological obstruction} \label{sec:FKM}

In this Section, we switch to the time-reversal symmetric setting. As was already mentioned, in this case the presence of a further symmetry kills the topological obstruction given by the Chern number \eqref{eqn:c1} \cite{Panati07, MonacoPanati15}. However, the same symmetry allows to refine the notion of ``symmetric Bloch frame'' by requiring that it be also time-reversal symmetric (compare Section \ref{sec:ObsTh}). This gives rise to a new topological obstruction encoded in the Fu--Kane--Mele $\Z_2$ invariant \cite{FuKane06,FiorenzaMonacoPanati16_F}, as we will now show.

Throughout this Section, $\set{P(k)}_{k \in \R^2}$ denotes a family of orthogonal projectors satisfying (P$_1$), (P$_2$) and (P$_3$).

\subsection{Reduction to the effective unit cell}
\label{sec:EBZ}

In order to investigate the existence of a global Bloch frame for $P(k)$ which is continuous, $\tau$-equivariant, and time-reversal symmetric, it is sufficient to focus one's attention to momenta in the \emph{effective unit cell} for the lattice $\Lambda = \Span_\Z \set{e_1, e_2}$, defined as
\[ \Beff := \set{k = k_1 e_1 + k_2 e_2 \in \R^2 : 0 \le k_1 \le 1/2, \: -1/2 \le k_2 \le 1/2}. \]
Indeed, all points of $\R^2$ can be mapped to $\Beff$ (in an a.e.~unique way) by means of a combination of a translation $k \mapsto k + \lambda$, $\lambda \in \Lambda$, and possibly an inversion $k \mapsto -k$. This means that if a Bloch frame is defined on $\Beff$ and satisfies the relevant symmetries there, then it is possible to extend its definition first to the unit cell $\B$ by enforcing time-reversal symmetry, and secondly to the whole $\R^2$ imposing $\tau$-equivariance. This dictates that the required frame $\Phi$ on $\Beff$ satisfies certain compatibility conditions on the boundary of the effective unit cell, namely that $\Phi(k+\lambda) = \tau_\lambda \Phi(k)$ and $\Phi(-k) = \Theta \Phi(k) \act \eps$, whenever $k \in \partial \Beff$ and $\lambda \in \Lambda$ are such that $\pm k+\lambda \in \partial \Beff$.

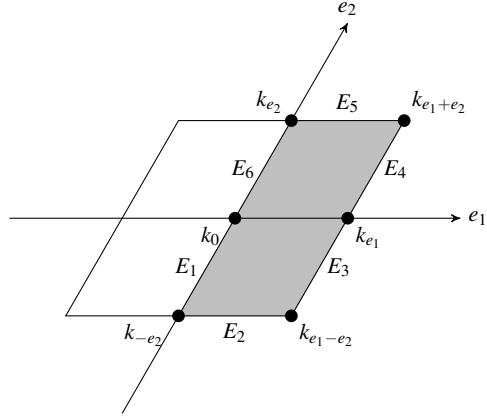
\begin{figure}[ht]
\sidecaption[t]
\centering
\begin{tikzpicture}[scale=3]
\coordinate (E1) at (1,0);
\coordinate (E2) at (.5,.866);
\coordinate (mE1) at (-1,0);
\coordinate (mE2) at (-.5,-.866);
\coordinate (K-E1-E2) at (-.75,-.433);
\coordinate (KE1-E2) at (.25,-.433);
\coordinate (KE1+E2) at (.75,.433);
\coordinate (K-E1+E2) at (-.25,.433);
\coordinate (K0) at (0,0);
\coordinate (KE1) at (.5,0);
\coordinate (KE2) at (.25,.433);
\coordinate (K-E2) at (-.25,-.433);
\filldraw [lightgray] (K-E2) -- (KE1-E2) -- (KE1+E2) -- (KE2) -- cycle;
\fill (K0) node [anchor = north east] {$k_{0}\:\:$} circle (.8pt)
      (KE1) node [anchor = north west] {$k_{e_1}$} circle (.8pt)
      (KE2) node [anchor = south east] {$k_{e_2}$} circle (.8pt)
            (K-E2) node [anchor = north east] {$k_{-e_2}\:\:$} circle (.8pt)
      (KE1-E2) node [anchor = north west] {$k_{e_1-e_2}$} circle (.8pt)
      (KE1+E2) node [anchor = south west] {$k_{e_1+e_2}$} circle (.8pt);
\draw (K-E1-E2) -- (KE1-E2) -- (KE1+E2) -- (K-E1+E2) -- cycle;
\draw [-stealth'] (mE1) -- (E1) node [anchor = west] {$e_1$};
\draw [-stealth'] (mE2) -- (E2) node [anchor = south] {$e_2$};
\draw (-.125,-.217) node [anchor = east] {$E_1$}
      (0,-.433) node [anchor = north] {$E_2$}
      (.375,-.217) node [anchor = west] {$E_3$}
      (.625,.217) node [anchor = west] {$E_4$}
      (.5,.433) node [anchor = south] {$E_5$}
      (.125,.217) node [anchor = east] {$E_6$};
\end{tikzpicture}
\caption{The effective unit cell $\Beff$ and the time-reversal invariant momenta.}
\label{fig:EBZ}
\end{figure}

We will again resort to the technique of obstruction theory. Consequently, we will choose a continuous Bloch frame $\Psi$ on $\Beff$, and try to modify it into a frame $\Phi$ satisfying the symmetries mentioned above. The two frames $\Psi(k)$ and $\Phi(k)$ will be related by a unitary transformation, which we denote by $U(k)$ as in~\eqref{BlochGauge}. As in Section~\ref{sec:BZ}, a special role is played by the high-symmetry points $k_\lambda$, defined by the relation $k_\lambda + \lambda = - k_\lambda$ with $\lambda \in \Lambda$ (that is, $k_\lambda = \lambda/2$). Six such points lie on the boundary of $\Beff$, and are usually referred to as the \emph{time-reversal invariant momenta} (compare Fig.~\ref{fig:EBZ}). 

\subsection{Bloch frame on the boundary}
\label{sec:TRSBoundary}

As a first step, we provide here the construction of a symmetric Bloch frame defined on the boundary of the effective unit cell $\Beff$, following the obstruction-theoretic approach employed in the previous Section for the non-time-reversal-symmetric case.

Let $k_\lambda$ be any of the time-reversal invariant momenta. Given the input frame $\Psi(k_\lambda)$, the transformed frames $\Theta \Psi(k_\lambda) \act \eps$ and $\tau_\lambda \Psi(k_\lambda)$ both give bases of the same vector space $\Ran P(-k_\lambda) = \Ran P(k_\lambda + \lambda)$. As such, they must differ by the action of an \emph{obstruction unitary} matrix:
\begin{equation} \label{eqn:TRSobstruction}
\Theta \Psi(k_\lambda) \act \eps = \tau_\lambda \Psi(k_\lambda) \act U\sub{obs}(k_\lambda).
\end{equation}
These unitary matrices satisfy a further self-compatibility condition, namely
\begin{equation} \label{eqn:Uobseps}
U\sub{obs}(k_\lambda)\tra \, \eps = \eps \, U\sub{obs}(k_\lambda),
\end{equation}
as can be deduced from the following considerations. Applying the operator $\tau_\lambda \Theta = \Theta \tau_{\lambda}^{-1}$ to both sides of the identity \eqref{eqn:TRSobstruction}, and using the defining properties of the time-reversal operator $\Theta$, we obtain
\[ \tau_{\lambda} \Psi(k_\lambda) \act (- \overline{\eps}) = \Theta \Psi(k_\lambda) \act \overline{U\sub{obs}(k_\lambda)}. \]
Using the relation \eqref{eqn:TRSobstruction} again we can rewrite the above equality as
\[ \Theta \Psi(k_\lambda) \act \left( -\eps \, U\sub{obs}(k_{\lambda})^{-1} \, \overline{\eps}\right) = \Theta \Psi(k_\lambda) \act \overline{U\sub{obs}(k_\lambda)} \]
from which we deduce that $-\eps \, U\sub{obs}(k_{\lambda})^{-1} \, \overline{\eps} = \overline{U\sub{obs}(k_\lambda)}$. Taking complex conjugates and using the fact that $-\overline{\eps}=\eps^{-1}$ (by unitarity and skew-symmetry) yields exactly \eqref{eqn:Uobseps}.

Write now $U\sub{obs}(v_\sharp) = \eu^{\iu T(v_\sharp)}$ for $v_\sharp \in \set{v_1, \ldots, v_4}$, with $T(v_\sharp) = T(v_\sharp)^*$ self-adjoint and satisfying $\sigma(T(v_\sharp)) \subset (- \pi, \pi]$. This normalization on the arguments of the eigenvalues of $U\sub{obs}(v_\sharp)$ gives that $T(v_\sharp)$ inherits the property \eqref{eqn:Uobseps} in the form
\begin{equation} \label{eqn:Teps}
T(v_\sharp)\tra \, \eps = \eps \, T(v_\sharp)
\end{equation}
(see \cite[Sec.~6, Lemma]{Hua44}).

Set now
\begin{equation} \label{eqn:TRSPhihat}
\widehat{\Phi}(k) := \begin{cases} 
\Psi(k) \act V(k) & \text{if } k \in S, \\
\tau_{e_1}^{-1} \Theta \Psi(\frac12,-k_2) \act \left( \overline{V(\frac12,-k_2)} \, \eps \right) & \text{if } k =(\frac12, k_2), \: k_2 \in [0,\frac12], \\
\tau_{e_2} \Psi(k_1, -\frac12) \act V(k_1,-\frac12) & \text{if } k = (k_1, \frac12), \: k_1 \in [0,\frac12], \\
\Theta \Psi(0,-k_2) \act \left( \overline{V(0,-k_2)} \, \eps \right) & \text{if } k =(0, k_2), \: k_2 \in [0,\frac12],
\end{cases}
\end{equation}
where 
\begin{multline*}
S := \set{k = \left(0,k_2\right): k_2 \in \left[-\tfrac12,0\right]} \cup \set{k = \left(k_1,-\tfrac12\right): k_1 \in \left[0,\tfrac12\right]} \\
\cup \set{k = \left(\tfrac12,k_2\right): k_2 \in \left[-\tfrac12,0\right]}
\end{multline*}
and for $k \in S$
\begin{equation} \label{eqn:V(k)}
V(k) := \begin{cases}
\eu^{\iu [(1+2k_2) T(v_1) - 2 k_2 T(v_2)]/2} & \text{if } k=(0,k_2), \: k_2 \in [-\frac12,0], \\
\eu^{\iu [(1-2k_1) T(v_2) + 2 k_1 T(v_3)]/2} & \text{if } k=(k_1,-\frac12), \: k_1 \in [-\frac12,0], \\
\eu^{\iu [(1+2k_2) T(v_3) - 2 k_2 T(v_4)]/2} & \text{if } k=(\frac12,k_2), \: k_2 \in [-\frac12,0].
\end{cases}
\end{equation}
Equation~\eqref{eqn:TRSPhihat} above defines a Bloch frame $\widehat{\Phi}$ on $\partial \Beff$ which is by construction $\tau$-equivariant and time-reversal symmetric. Notice also that \eqref{eqn:V(k)} yields
\[ U\sub{obs}(k_\lambda) = V(k_\lambda)^2 = V(k_\lambda) \eps^{-1} V(k_\lambda)\tra \eps\]
at the time-reversal invariant momenta. Repeated use of the defining property \eqref{eqn:TRSobstruction} for $U\sub{obs}(k_\lambda)$ and of its generator $T(k_\lambda)$, together with \eqref{eqn:Uobseps} and \eqref{eqn:Teps}, shows that $\widehat{\Phi}$ also joins continuously at the time-reversal invariant momenta. For example, at $k_\lambda = k_{e_1} = (1/2,0)$ we have
\begin{align*}
\tau_{e_1}^{-1} \Theta \Psi(k_{e_1}) \act \left( \overline{V(k_{e_1})} \, \eps \right) & = \tau_{e_1}^{-1} \Theta \Psi(k_{e_1}) \act \left( \eps \, V(k_{e_1})^* \right) \\
& = \tau_{e_1}^{-1} \left( \Theta \Psi(k_{e_1}) \act \eps \right) \act \, V(k_{e_1})^* \\
& = \tau_{e_1}^{-1} \left( \tau_{e_1} \Psi(k_{e_1}) \act U\sub{obs}(k_{e_1}) \right) \act V(k_{e_1})^{-1} \\
& = \Psi(k_{e_1}) \act\left( V(k_{e_1})^2 V(k_{e_1})^{-1} \right) = \Psi(k_{e_1}) \act V(k_{e_1}).
\end{align*}

\subsection{Extension to the face: a topological obstruction}
\label{sec:TRSExtension}

Let $\widehat{U}$ denote the unitary transformation mapping the input frame $\Psi$ to the Bloch frame $\widehat{\Phi}$ we just constructed, as in \eqref{hatU}. We have already argued in the previous Section that the obstruction to the continuous extension of the map $\widehat{U} \colon \partial \Beff \to \U(m)$ to the interior of the effective unit cell is measured precisely by the vanishing of the integer $\deg([\widehat{U}]) \in \Z$ given by
\begin{equation} \label{eqn:TRSdefdeg}
\deg([\widehat{U}]) = \frac{\iu}{2 \pi} \oint_{\partial \Beff} \di k \, \Tr \left( \widehat{U}(k)^{-1} \partial_k \widehat{U}(k) \right) 
\end{equation}
(compare \eqref{degdet}). However, in this new setting it is no longer the case that the extension problem for the unitary $\widehat{U}$ is equivalent to the one for the frame $\widehat{\Phi}$, as opposed to the situation in Remark~\ref{rmk:unwinding}. Indeed, we have the following result.

\begin{proposition} \label{TRSX}
Let $\Phi$ be a continuous Bloch frame on $\partial \Beff$ which is symmetric, and assume that $X \colon \partial \Beff \to \U(m)$ is a continuous map such that $\Phi \act X$ is also symmetric. Then
\[ \deg([X]) \in 2 \Z. \]
\end{proposition}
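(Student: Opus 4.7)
The plan is to follow the structure of the proof of Proposition~\ref{X}, now exploiting both the $\tau$-equivariance and the time-reversal symmetry of the frames $\Phi$ and $\Phi \act X$. The first step is to translate the symmetry hypotheses into two algebraic constraints on the unitary-valued map $X \colon \partial \Beff \to \U(m)$: $\tau$-equivariance of both frames gives $X(k+\lambda) = X(k)$ on paired boundary points, exactly as in Proposition~\ref{X}; while the identities $(\Phi\act X)(-k) = \Theta(\Phi\act X)(k) \act \eps$ and $\Phi(-k) = \Theta \Phi(k)\act\eps$, together with the antiunitarity of $\Theta$ (which turns a right-action by $X$ into one by $\overline{X}$), yield
\[
X(-k) \;=\; \eps^{-1}\,\overline{X(k)}\,\eps
\qquad \text{whenever } \pm k \in \partial \Beff.
\]

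The main obstacle, and the crucial new input with respect to Proposition~\ref{X}, is to pin down the behaviour of $X$ at the time-reversal invariant momenta $k_\sharp \in \partial \Beff$. Combining the above two constraints at such a fixed point and using both the unitarity of $X$ and the identity $\eps^2 = -\Id$ coming from the standard form \eqref{eps=J}, a short algebraic manipulation produces the symplecticity relation
\[
X(k_\sharp)\, \eps\, X(k_\sharp)\tra \;=\; \eps.
\]
Taking determinants of this identity only yields $\det X(k_\sharp) \in \{\pm 1\}$, which would \emph{not} suffice to fix the parity of $\deg([X])$; the stronger conclusion $\det X(k_\sharp) = +1$ follows instead from the Pfaffian transformation law $\Pf(B A B\tra) = \det(B)\,\Pf(A)$, applied with $A = \eps$ (whose Pfaffian is nonzero since $\eps$ is invertible).

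The computation of the integer \eqref{eqn:TRSdefdeg} now mirrors that of Proposition~\ref{X}. Decompose $\partial \Beff$ into its six edges $E_1, \dots, E_6$ (Fig.~\ref{fig:EBZ}). The $\tau$-equivariance of $X$ forces the contributions from the two horizontal edges $E_2$ and $E_5$ to cancel by exactly the argument used in Proposition~\ref{X}, so that $\deg([X])$ reduces to the difference of the line integrals along the two vertical sides, $\gamma_L := E_1 + E_6$ and $\gamma_R := E_3 + E_4$. On $\gamma_L$, parametrized as $s \mapsto (0,s)$ for $s \in [-\tfrac12, \tfrac12]$, write $\det X(0,s) = \eu^{\iu \varphi(s)}$ with $\varphi$ a continuous real lift. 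Taking determinants of the TR relation gives $\det X(0,-s) = \overline{\det X(0,s)}$, hence $\varphi(s) + \varphi(-s) \in 2\pi\Z$, which by continuity forces $\varphi(s) + \varphi(-s) = 2\varphi(0)$; and $\det X = 1$ at the three TRIM on $\gamma_L$ implies $\varphi(0), \varphi(\pm\tfrac12) \in 2\pi \Z$. Consequently
\[
\varphi(\tfrac12) - \varphi(-\tfrac12) \;=\; 2\varphi(\tfrac12) - 2\varphi(0) \;\in\; 4\pi\Z,
\]
so the contribution of $\gamma_L$ to $\deg([X])$ is an even integer; the identical argument on $\gamma_R$ then yields $\deg([X]) \in 2\Z$.
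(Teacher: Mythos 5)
Your proof is correct and follows essentially the same route as the paper's: translate the two symmetries into the constraints of \eqref{eqn:propX}, deduce that $X$ is symplectic (hence of unit determinant) at the time-reversal invariant momenta, and extract the parity of $\deg([X])$ from the winding of $\det X$ along the boundary edges; your Pfaffian remark usefully makes explicit why the determinant is $+1$ rather than merely $\pm 1$, a point the paper only asserts. The only differences are cosmetic --- you keep the full vertical edges and use the reflection symmetry of the phase $\varphi$ where the paper folds each vertical edge in half to get the overall factor of $2$ in \eqref{TRSdegX} --- plus one small imprecision: on the edge $k_1 = \tfrac12$ the time-reversal constraint must carry the lattice shift, $X(-k+e_1) = \eps^{-1}\,\overline{X(k)}\,\eps$, as in the paper's condition $X(-k+\lambda)\tra\,\eps\,X(k) = \eps$, since $-k$ itself does not lie in $\partial\Beff$ there; your argument on $\gamma_R$ and at $k_{e_1}$ implicitly uses this corrected form, so nothing breaks.
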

\begin{proof}
One easily computes that asking that $\Phi \act X$ be again $\tau$-equivariant and time-reversal symmetric is equivalent to the conditions
\begin{equation} \label{eqn:propX}
X(k+\lambda) = X(k) \quad \text{and} \quad X(-k+\lambda)\tra \, \eps \, X(k) = \eps
\end{equation}
whenever $k \in \partial \Beff$ and $\lambda \in \Lambda$ are such that $\pm k+\lambda \in \partial \Beff$. In view of the above conditions, the integral computing the degree of $X$, as in \eqref{eqn:TRSdefdeg}, simplifies to
\begin{equation} \label{TRSdegX}
\deg([X]) = 2 \left\{ \frac{\iu}{2 \pi} \int_{E_1} \di k \, \Tr \left( X(k)^{-1} \partial_k X(k) \right) + \frac{\iu}{2 \pi} \int_{E_3} \di k \, \Tr \left( X(k)^{-1} \partial_k X(k) \right) \right\}
\end{equation}
where the $E_i$'s are the portions of $\partial \Beff$ connecting two consecutive time-reversal invariant momenta (compare Fig.~\ref{fig:EBZ}).

Notice now that for a unitary-matrix-valued map
\[ \Tr \left( X(k)^{-1} \partial_k X(k) \right) = \xi(k)^{-1} \partial_k \xi(k), \quad \text{with } \xi(k) = \det X(k) \in U(1) \]
(see \eg \cite[Lemma~2.12]{CorneanMonacoTeufel16}). On $E_1$ and $E_3$, the maps $k \mapsto \xi(k)$ are actually periodic, since the second condition in \eqref{eqn:propX} implies that at the time-reversal invariant momenta $k_\lambda$ the matrix $X(k_\lambda)$ must be symplectic and thus of unit determinant. The term in curly brackets on the right-hand side of \eqref{TRSdegX} then computes the sum of the winding numbers of the maps $\xi\big|_{E_1}$ and $\xi\big|_{E_3}$, and is thus an integer. This concludes the proof of the Proposition.
\end{proof}

The above result shows that if $\deg([\widehat{U}]) = 2 r \in 2 \Z$ is even, it is still possible to ``unwind'' the map $\widehat{U}$ with the help of an auxiliary map $X$, without breaking the symmetries ($\tau$-equivariance, time-reversal) enjoyed by the frame $\widehat{\Phi}$ as in \eqref{eqn:TRSPhihat}. Indeed, it is easily verified that the map $X \colon \partial \Beff \to \U(m)$ defined (in the basis where $\eps$ is of the form \eqref{eps=J}) by
\[ X(k) = \begin{cases}
\eu^{-2 \pi \iu r (k_2 + 1/2)} \Id_2 \oplus \Id_{m-2} & \text{if } k = (\frac12, k_2) \in E_3 \cup E_4, \: k_2 \in [-\frac12,\frac12], \\
\Id_m & \text{elsewhere in } \partial \Beff,
\end{cases} \]
satisfies \eqref{eqn:propX} and $\deg([X]) = - 2r$. It follows that the frame $\Psi \act (\widehat{U} X)$ is still continuous, $\tau$-equivariant and time-reversal symmetric, and extends to a continuous Bloch frame $\Phi$ in the interior of $\Beff$ since $\deg([\widehat{U} X])=0$.

We conclude that the topological obstruction to the existence of a continuous and symmetric Bloch frame is measured by the quantity 
\begin{equation} \label{delta}
d := \deg([\widehat{U}]) \mod 2.
\end{equation}
It can be shown \cite{FiorenzaMonacoPanati16_F} that $d \in \Z_2$ defines a true \emph{topological invariant} for the family of projectors $P(k)$ enjoying (P$_1$), (P$_2$) and (P$_3$), that is, it does not depend on the choice of the input frame $\Psi$ and on the explicit form of the interpolation $V$ as in \eqref{eqn:V(k)}, and moreover it stays constant under continuous deformations (homotopies) of the family of projectors which preserve its symmetry properties.

\subsection{The obstruction is the Fu--Kane--Mele index}
\label{sec:TRSTopObs}

Arguing along the same lines of Section~\ref{sec:TopObs}, it is possible to write the topological invariant $d$ in terms of the Berry connection and the Berry curvature associated to the family of projectors $P(k)$, which in turn connects $d$ with the Fu--Kane--Mele $\Z_2$ index for time-reversal symmetric topological insulators \cite{FuKane06}. Indeed, the analogue of \eqref{c1-d} reads now
\begin{equation} \label{eqn:delta=FKM}
\delta = \frac{1}{2 \pi} \int_{\Beff} \mathcal{F} - \frac{1}{2\pi} \oint_{\partial \Beff} \widehat{\A} \mod 2 = \delta(P),
\end{equation}
compare \eqref{eqn:delta}. Let us stress that the Berry connection $\widehat{\A}$ appearing in the above formula must be computed with respect to a frame $\widehat{\Phi}$ which is $\tau$-equivariant \emph{and time-reversal symmetric} on the boundary of the effective unit cell $\Beff$. This guarantees, for example, that the expression on the right-hand side is gauge-independent, as we have seen how a change of unitary gauge which preserves the symmetries must have even degree (Proposition~\ref{TRSX}). 

Notice that, contrary to the case of the Chern number treated in Section~\ref{sec:TopObs}, the ``boundary term'' in \eqref{eqn:delta=FKM} need not vanish. Indeed, the symmetries
\begin{equation} \label{Berrysymm} 
\widehat{\A}(k+\lambda) = \widehat{\A}(k) = \widehat{\A}(-k)
\end{equation}
of the coefficient of the Berry connection $1$-form, which are inherited from the $\tau$-equivariance and the time-reversal symmetry of the underlying frame $\widehat{\Phi}$, only imply that its integral on $\partial \Beff$ can be simplified to
\[ \oint_{\partial \Beff} \widehat{\A} = 2 \int_{E_1+E_3} \widehat{\A} = 2  \left( \int_{0}^{1/2} \di k_2 \, \left[ \widehat{\A}(1/2,k_2) -  \widehat{\A}(0,k_2)\right] \right). \] 
The first equality in \eqref{Berrysymm} can be argued exactly as in \eqref{eqn:Berrytau}, while for the second one we proceed as follows. From the time-reversal symmetry property $\widehat{\Phi}(-k) = \Theta \widehat{\Phi}(k) \act \eps$ it follows that
\[ \Theta \left( \partial_k \widehat{\Phi}(k) \right) \act \eps = - \left( \partial_k \widehat{\Phi} \right)(-k). \]
Using the relation above together with the defining property $\widehat{\Phi}(k) \act \widehat{A}(k) = -\iu \partial_k \widehat{\Phi}(k)$ for the connection matrix $\widehat{A}(k)$ we then obtain
\begin{align*}
\Theta \widehat{\Phi}(k) \act \left( \eps \widehat{A}(-k) \right) & = \widehat{\Phi}(-k) \act \widehat{A}(-k) = - \iu \left( \partial_k \widehat{\Phi} \right)(-k) \\
& = \iu \Theta \left( \partial_k \widehat{\Phi}(k) \right) \act \eps = \Theta \left(- \iu \partial_k \widehat{\Phi}(k) \right) \act \eps \\
& = \Theta \left( \widehat{\Phi}(k) \act \widehat{A}(k) \right) \act \eps = \Theta \widehat{\Phi}(k) \act \left( \overline{\widehat{A}(k)} \eps \right).
\end{align*}
We conclude that $\widehat{A}(-k) = \eps^{-1} \overline{\widehat{A}(k)} \eps$, and by taking the trace that $\widehat{\A}(-k) = \overline{\widehat{\A}(k)} =\widehat{\A}(k)$, because $\widehat{\A}(k)$ takes values in the Lie algebra $\mathfrak{u}(1) = \R$.

\begin{acknowledgement}
The author is indebted with D.~Fiorenza and G.~Panati for valuable discussions, as well as with G.~dell'Antonio and A.~Michelangeli for the invitation to the INdAM meeting ``Contemporary Trends in the Mathematics of Quantum Mechanics''. Financial support from INdAM and from the German Science Foundation (DFG) within the GRK 1838 ``Spectral theory and dynamics of quantum systems'' is gratefully acknowledged.
\end{acknowledgement}
%
%
%


\end{document}